\newtheorem{theorem}{Theorem}[section]
\newtheorem{definition}[theorem]{Definition}
\newtheorem{lemma}[theorem]{Lemma}
\title{Adjusting Queries to Statistical Procedures Under
Differential Privacy}
\author{
  Tomer Shoham \\
  Department of Computer Science and Federmann Center for the Study of Rationality\\
  The Hebrew University of Jerusalem \\
  \texttt{Tomer.Shoham@mail.huji.ac.il} \\
   \And
  Yosef Rinott \\
  Department of Statistics and Federmann Center for the Study of Rationality\\
  The Hebrew University of Jerusalem \\
  \texttt{Yosef.Rinott@mail.huji.ac.il} \\
}
\begin{document}
\maketitle

\begin{abstract}
  We consider a dataset $S$ held by an agency, and a vector query of  interest, $f(S) \in \mathbb{R}^k$, to be posed by an analyst, which contains the information required for certain planned statistical inference. The agency releases the requested vector query with noise that guarantees a given level of Differential Privacy --   DP$(\varepsilon,\delta)$ --   using  the well-known Gaussian mechanism. The analyst can choose to pose the vector query $f(S)$ or to adjust it by a suitable transformation that can make the agency's response more informative. For any given level of privacy DP$(\varepsilon,\delta)$ decided by the agency, we study natural  situations where the analyst can achieve better statistical inference by adjusting the query with a suitable simple explicit transformation.
\end{abstract}

\keywords{Gaussian Mechanism \and Normal Datasets \and Random Differential Privacy \and Confidence Region}

\section{Introduction}
Throughout the paper we consider a dataset or sample $S$ given as an $n \times d$ matrix, where each row pertains to an individual in the sample, and $d$ variables are measured for each of the $n$ participant in the sample. The sample $S$ is held by some agency and an analyst is interested in a vector function $f(S) =(f_1(S),\ldots,f_k(S)) \in \mathbb{R}^k$  of the data, to be called a query. Thus, a query consists of $k$ functions of the data to be posed to the agency by the analyst. We consider throughout the case $k>1$.  We assume that for privacy considerations, the agency releases the response to the query $f(S)$ with noise, using a standard Gaussian mechanism that adds independent $N(0,\sigma^2)$ noise to each coordinate of $f(S)$. The distribution of the added noise is always assumed to be known to the analyst, a standard assumption in the differential privacy literature. Two samples $S$ and $S'$ are said to be neighbors, denoted by $S \sim S'$, if they differ by a  single individual, i.e., a single row. See, e.g., \cite{Dwork_2014_book} for all needed details on Differential Privacy (henceforth DP). When we consider $S$ and $S'$ together we always assume that they are neighbors.

More generally, consider a noise mechanisms applied to $S$ via a query $h(S) \in \mathbb{R}^k$ of the form ${\cal M}_h(S)= h(S)+U \in \mathbb{R}^k$, where $U$ is a random vector. A mechanism ${\cal M}_h$ is said to be DP$(\varepsilon,\delta)$ if for all (measurable) sets $E$ we have 
\begin{equation}\label{eq:DPDP}
P({\cal M}_h(S) \in E) \le e^\varepsilon P({\cal M}_h(S') \in E)+\delta\end{equation}
for all $S \sim S'\in 
{\mathfrak D}$,  where the probability refers to the randomness of $U$, and ${\mathfrak D}$ is the universe of potential datasets. For example, if $S$ is a sample of a given size $n$ from some population, then ${\mathfrak D}$ is the universe of all samples that could have been drawn and considered for dissemination. The standard definition of DP takes ${\mathfrak D}$ to be a product $C^n$ where $C$ consists of all possible rows. 
Our results hold for any given $\varepsilon>0$ and $\delta \in (0,1)$,  which we fix for the rest of this paper.

Our goal is to describe some simple natural examples where posing a linear transformation of the query $f(S)$, getting the agency's response via a mechanism that guarantees DP, and inverting the response to obtain the required information on $f(S)$ yields better inference on $f(S)$ in the case where $S$ is a given fixed dataset, and on the model that generates $f(S)$, when $S$ is a random sample.

\textbf{Some related work}: The principle of modifying queries for better results is not new. The Matrix Mechanism  (MM) is put forward in a line of work that started with \cite{OptHisQue}. Further literature includes  \cite{edmonds2020power, MM, li2015matrix, HDMM} and numerous references therein. For given queries, MM
linearly modifies the original data by applying a matrix that depends on the queries to be answered. The modified data is released with noise, and the
answer to the original queries is computed. The above literature studies algorithms for finding optimal modifying matrices that minimizes the distance between the original queries and the mechanism's output relative to different utility metrics. 

The difference between the above papers and ours is fourfold. First, we provide simple explicit transformations for the situations we consider rather than a numerical optimization algorithm; second, we consider continuous data while  MM is directed mostly toward frequency tables and counting queries. Applying MM to continuous data leads to large and sparse tables of counts (contingency tables) and high complexity of the algorithm; third, our transformations are aimed toward specific statistical goals, rather than standard norms (metrics); and fourth, we consider also random datasets where inference is on the data-generating process.

 For certain specific utility metrics and queries,  optimal noise mechanisms have been  found; see, e.g., \cite{ghosh2012universally, gupte2010universally}, but in general many researchers consider simple mechanisms with a well-known distribution (e.g., addition of Laplace or Gaussian iid noise) without considering optimality. 
In \cite{R2DP} a noise mechanism is proposed where the variance of the Laplace noise is random. Given a query and its sensitivity, an algorithm for optimal choice of the distribution of the Laplace variance from a certain class of distributions is provided. Optimality is with respect to the expected distance (metric) between the original query and the output of the mechanism. 

When  a dataset is randomly generated by some assumed distribution, it is well known that the analyst  has to adjust the statistical procedure to the distribution of the observed data, taking the distribution of the added noise into account; see, e.g., \cite{solea2014differentially, rogers2017new, wang2018statistical, canonne2019structure, gaboardi2016differentially} and  references therein.  Most of these results are asymptotic. 
   
\section{Fixed (non-random) datasets}\label{sec:fixedNR}
Consider  a dataset $S$ held by an agency and an analyst who poses a query $f(S)$ in terms of measurement units of his choosing. For example, the components of $f(S)$ could be average age, average years of schooling, and median income in the sample $S$. The observed response is given with noise through a privacy mechanism applied by the data-holding agency. The analyst's   goals are to construct a confidence region for $f(S)$ and to test simple hypotheses about it.  For any given level $\varepsilon,\delta$ of DP, we show that instead of posing the query $f(S)$, the analyst can obtain a smaller confidence region for $f(S)$ by  computing it from  a query of the form $f_\xi(S)= Diag(\xi)^{1/2}f(S)$ for a suitable  $\xi \in \mathbb{R}_{\ge 0}^k$ (a vector having nonnegative coordinates), where $Diag(\xi)$ is a diagonal matrix whose diagonal elements form the vector  $\xi$.
For the goal of testing hypotheses, it turns out that a different choice of $\xi$ maximizes the power of the standard likelihood-ratio test.  Thus, the analyst can achieve  better inference by adjusting his queries to the planned statistical procedure.

Consider a row $(x_1,\ldots,x_d)$ in the sample $S$. For simplicity we assume that $x_i \in C_i$ for $i=1,\ldots,d$ for suitable sets $C_i$.
In this case each row is in the Cartesian product $C := {C}_1\times\ldots\times {C}_d$ and we set  
${\mathfrak D}:=C^n$. We can also write ${\mathfrak D} \coloneqq {\mathcal C}_1\times\ldots\times {\mathcal C}_d$, where ${\mathcal C}_i$ is the set of $n$-vectors whose coordinates are all in $C_i$.
We assume that the agency releases data under DP$(\varepsilon,\delta)$ relative to this universe ${\mathfrak D}$, which is known to both the agency and the analyst.

In Section \ref{sec:fixedNR} we assume that the components $f_i$ of the vector query $f= \left(f_1,\ldots,f_k\right)$ are functions of disjoint sets of columns of $S$. This assumption is not needed in Section \ref{sec:randomdata}. The quantity 
$$\Delta(f):=\operatorname*{max}_{S\sim S' \in {\mathfrak D}}||f(S)-f(S')||,$$ where $|| \cdot ||$ denotes the $L_2$ norm, is known as the sensitivity of $f$; higher sensitivity requires more noise for DP.
Under simple assumptions on the functions $f_i$ such as monotonicity, the agency can readily compute $\Delta(f)$, as well as
\begin{equation}\label{eq:tildeS}
(\widetilde{S},\widetilde{S}') := \operatorname*{argmax}_{S\sim S' \in {\mathfrak D}}||f(S)-f(S')||;
\end{equation}
see Lemma \ref{le:ontildeS}, where it is shown that the maximization can be done separately for each coordinate of $f$. In general, the maximum in \eqref{eq:tildeS} is not unique, in which case $\arg\max$ is a set of pairs. 
	
The agency plans to release a response to the query $f(S)$ via a standard Gaussian mechanism; that is, the response is given by $${\cal M}(S) = f(S)+U  \text{  where  } U \sim N(0, \sigma^2I)$$ and  I is the $k \times k$ identity matrix. The variance $\sigma^2$ is the minimal variance such that the mechanism  satisfies DP$(\varepsilon,\delta)$ for given $\varepsilon,\delta$; it can be determined by Lemma \ref{le:BW} below,  which appears in \cite{balle2018improving}. This variance depends on $\Delta(f)$; however, here $f$ is fixed and hence suppressed. 

Consider a family of queries adjusted by $Diag(\xi)$: $$f_\xi(S):=Diag(\xi)^{1\!/2}f(S)=\big({\xi_1}^{\!\!\! 1\!/2}f_{1}(S),
\ldots,{\xi_k}^{\!\!\! 1\!/2}f_{k}(S) \big).$$ In particular, for the vector $\xi$ whose components are all equal to one we have $f_{\xi=1}=f$.
Given a query from this family, the agency returns a perturbed response using a  Gaussian mechanism ${\cal M}_\xi$ by adding to $f_\xi$  a Gaussian vector  $U  \in\mathbb{R}^k$ where $U \sim N(0,\sigma^2 I)$, that is, $${\cal M}_\xi(S) = f_\xi(S) + U.$$ It is easy to see directly or from Lemma \ref{le:BW} that we can fix $\sigma^2$ and guarantee a given level of  DP$(\varepsilon,\delta)$ by choosing $\xi$ appropriately.
Hence  fixing $\sigma^2$ does not result in loss of generality. This is explained immediately following Theorem \ref{th: CIfix}. 
\subsection{Confidence regions}	\label{sec:cifix}

The following discussion concerns the choice of  $\xi \in \mathbb{R}^k_{>0}$ such that the standard confidence region
$CR^t_\xi$  for $\mu^*:=f(S)$ given in formula \eqref{eq:conf_regi_ellips_adjusted} below, which is based on the observed ${\cal M}_\xi(S)$, has the smallest volume. 
 It is easy to see that allowing the variance of $U$ in ${\cal M}_\xi$ to depend on $\xi$ does not lead to smaller volumes.
 
 The idea is simple: intuitively it appears efficient to add more noise to the more variable components of $f(S)$ rather than ``waste noise" on components with low variability. Note that  ``more variable" depends on both the population being measured and the chosen units of measurement. Instead of  asking the agency to adjust the noise to different components, we adjust the query, and thus the agency can use a standard Gaussian mechanism.
This intuition, as the whole paper, is clearly relevant only for $k>1$.

The analyst observes ${\cal M}_\xi(S) = f_\xi(S) + U$, where
\begin{equation*}
	{\cal M}_\xi(S) = \big(Diag(\xi)^{1/2}f(S) + U\big) \sim N(Diag(\xi)^{1/2}f(S), \sigma^2 I).
\end{equation*}
Thus, 
\begin{equation*}
Diag(\xi)^{-1/2}{\cal M}_\xi(S) =  \big( f(S) + Diag(\xi)^{-1/2}U\big)  \sim N(\mu^*, Diag(\xi)^{-1}\sigma^2),
\end{equation*}
where $\mu^*:=f(S)$.

The standard confidence region for $\mu_x$ based on $X \sim N(\mu_x, \Sigma)$ is $\{\mu:(X-\mu)^T\Sigma^{-1}(X-\mu)\le t\}$; see, e.g., \cite{anderson1962introduction}, p. 79.  Thus,
the confidence region for $\mu^*=f(S)$ based on $Diag(\xi)^{-1/2}{\cal M}_\xi(S)$  becomes
\begin{equation}\label{eq:conf_regi_ellips_adjusted}
	CR^t_\xi = \{ \mu \in \mathbb{R}^k: \big(Diag(\xi)^{-1/2}{\cal M}_\xi(S)-\mu \big)^T\\ (Diag(\xi)\sigma^{-2})  \big(Diag(\xi)^{-1/2}{\cal M}_\xi(S)-\mu \big) \leq t \}.
\end{equation}
For any $\xi \in \mathbb{R}_{>0}^k$ and any $\mu^* \in \mathbb{R}^k$,       the coverage probability $P(\mu^* \in CR^t_\xi)=P(Y\le t)$  where $Y \sim {\cal X}^2_k$ (the chi-square distribution with $k$ degrees of freedom), and thus all the regions $CR^t_\xi$ have the same confidence (coverage) level. 
We denote the volume by $Vol( CR^t_{\xi})$. For a discussion of the volume as a measure of utility of confidence regions see, e.g., \cite{efron2006minimum}.  We now need the notation 
 \begin{equation*}\label{eq:tilda}\psi \coloneqq f(\widetilde{S})-f(\widetilde{S}')=\big(f_1(\widetilde{S})-f_1(\widetilde{S}'),\ldots,f_k(\widetilde{S})-f_k(\widetilde{S}') \big), \end{equation*}
 where $(\widetilde{S},\widetilde{S}')$ is any pair in the set defined in \eqref{eq:tildeS},
 and we assume that $\psi_i^2>0$ for all $i$.

\begin{theorem} \leavevmode \label{th: CIfix} 
	
	{\rm({\bf 1})} For any fixed $t$, all confidence regions $CR^t_\xi$   defined in Equation  \eqref{eq:conf_regi_ellips_adjusted} have the same confidence level; that is, the probability $P(\mu^* \in CR^t_\xi)$  depends only on $t$ (and not on $\xi$).
	
	{\rm({\bf 2})} Set $$\Lambda(\xi)=\frac{\sqrt{\psi^T Diag(\xi) \psi}}{\sigma}.$$ If for two vectors $\xi_a$ and $\xi_b$ the mechanisms ${\cal M}_{\xi_a}$ and ${\cal M}_{\xi_b}$
	have the same level of {DP} (that is, the same $\epsilon$ and $\delta$) then
	$\Lambda(\xi_a)=\Lambda(\xi_b)$.
	
	{\rm({\bf 3})}	The choice $\xi = \xi^* := c\,(1/\psi_1^2,\ldots,1/\psi_k^2)$ with $c=||\psi||^2/k$ minimizes $ Vol( CR^t_\xi)$ for any $t>0$ over  all vectors $\xi \in \mathbb{R}_{>0}^k$ and associated mechanisms $\cal M_\xi$ having the same DP level. In particular,
	$$Vol( CR^t_{\xi^*}) \leq Vol( CR^t_{\xi=1}),$$
	with strict inequality when $max_i(\psi_i) \neq min_i(\psi_i)$. 
	The right-hand side of the inequality pertains to the query $f$.
\end{theorem}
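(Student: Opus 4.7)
The plan is to handle the three parts in order. For Part (1), I would substitute ${\cal M}_\xi(S) = f_\xi(S) + U$ into the quadratic form in \eqref{eq:conf_regi_ellips_adjusted} and evaluate it at $\mu = \mu^* = f(S)$. Since $Diag(\xi)^{-1/2} f_\xi(S) = f(S)$, the $\mu^*$ terms cancel and the quadratic form collapses to $U^T U/\sigma^2$, which follows a $\chi^2_k$ distribution because $U \sim N(0,\sigma^2 I)$. Hence the coverage probability $P(\mu^* \in CR^t_\xi) = P(\chi^2_k \le t)$ depends only on $t$.

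For Part (2), the key step is to compute the sensitivity of $f_\xi$ and relate it to the DP level. Since the components of $f$ depend on disjoint sets of columns, a single neighbor pair can simultaneously realize the worst-case difference in every coordinate (this is the content of Lemma \ref{le:ontildeS}). Consequently,
\[
\Delta(f_\xi)^2 \;=\; \max_{S \sim S'} \sum_{i=1}^k \xi_i \big(f_i(S) - f_i(S')\big)^2 \;=\; \sum_{i=1}^k \xi_i \psi_i^2 \;=\; \psi^T Diag(\xi) \psi \;=\; \sigma^2 \Lambda(\xi)^2.
\]
Lemma \ref{le:BW} expresses the DP level for a Gaussian mechanism with fixed noise variance $\sigma^2$ as a strictly monotone function of the sensitivity $\Delta(f_\xi)$. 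Therefore equal DP levels force equal values of $\Delta(f_\xi)/\sigma = \Lambda(\xi)$, giving $\Lambda(\xi_a) = \Lambda(\xi_b)$.

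For Part (3), I would compute $Vol(CR^t_\xi)$ directly: the region in \eqref{eq:conf_regi_ellips_adjusted} is the ellipsoid defined by the positive definite matrix $Diag(\xi)\sigma^{-2}$, whose volume is proportional to $\sigma^k / \sqrt{\prod_i \xi_i}$. By Part (2), fixing the DP level at fixed $\sigma$ is equivalent to holding $\sum_i \xi_i \psi_i^2$ constant. Minimizing the volume thus reduces to maximizing $\prod_i \xi_i$ subject to $\sum_i \xi_i \psi_i^2 = \text{const}$. Applying the AM--GM inequality to the nonnegative terms $\xi_i \psi_i^2$, the maximum is attained exactly when all $\xi_i \psi_i^2$ are equal, yielding $\xi_i^* \propto 1/\psi_i^2$. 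Normalizing so that the DP constraint matches that of $\xi = 1$ (i.e.\ $\sum_i \xi_i^* \psi_i^2 = \|\psi\|^2$) pins down $c = \|\psi\|^2 / k$. The final inequality $Vol(CR^t_{\xi^*}) \le Vol(CR^t_{\xi = 1})$ follows from a second application of AM--GM, now to $\psi_1^2,\ldots,\psi_k^2$, which gives $\prod_i \xi_i^* \ge 1$, strictly unless all $\psi_i^2$ coincide.

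The main obstacle, I expect, is the sensitivity calculation in Part (2): one is tempted to take a supremum separately in each coordinate and worry that different coordinates prefer different neighbor pairs, but the disjoint-columns hypothesis together with Lemma \ref{le:ontildeS} collapses this into a single joint maximization. Once that step is secured, the remainder of the proof is a straightforward quadratic-form identity and two applications of AM--GM.
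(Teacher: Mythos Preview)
Your argument is correct and tracks the paper's proof closely in Parts (1) and (2). For Part (3) you take a slightly different route at the optimization step: the paper minimizes $-\sum_i \log \xi_i$ subject to the linear constraint $\sum_i \xi_i \psi_i^2 = \|\psi\|^2$ via a Lagrangian (invoking strict convexity for uniqueness), whereas you maximize $\prod_i \xi_i$ under the same constraint by applying AM--GM to the terms $\xi_i \psi_i^2$. Both lead to $\xi_i^* \propto 1/\psi_i^2$, but your AM--GM argument is more elementary and delivers the equality condition (all $\xi_i \psi_i^2$ equal) for free, while the Lagrangian route makes the strict-convexity structure explicit and generalizes more readily to non-diagonal transformations. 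The paper does invoke AM--GM afterward, but only to bound the volume ratio $Vol(CR^t_{\xi^*})/Vol(CR^t_{\xi=1})$; you use it twice, once for the optimizer and once for the comparison with $\xi = 1$.
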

Fix	$\sigma^2$ to be the smallest variance such that the Gaussian mechanism ${\cal M}_{\xi=1}(S)$ guarantees $DP(\varepsilon, \delta)$ for the query $f$.
Part (2) of Theorem \ref{th: CIfix} or of Theorem \ref{th:LRfix} below shows that for any ${\cal M}_\xi$ to have the same DP  level as ${\cal M}_{\xi=1}$ we must have
$\frac{\sqrt{\psi^T Diag(\xi) \psi}}{\sigma}=\frac{\sqrt{\psi^T \psi}}{\sigma}$.

For the proof we need two lemmas that are given first.

\begin{lemma}For any $\xi \in \mathbb{R}_{>0}^k$. 
	\label{le:ontildeS}
	\begin{equation*}
	\Delta(f_\xi) \equiv \operatorname*{max}_{S\sim S'\in {\mathfrak D}}||f_\xi(S)-f_\xi(S')||
	= ||f_\xi(\widetilde{S})-f_\xi(\widetilde{S}'))||,
\end{equation*}
where the pair $(\widetilde{S},\widetilde{S}')$ is defined in  Equation \eqref{eq:tildeS}.
\end{lemma}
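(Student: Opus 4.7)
The plan is to exploit the disjoint-columns hypothesis to decompose the sensitivity maximization coordinate-wise, and then observe that the coordinate-wise argmax is unaffected by the positive scalars $\xi_i$.

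First, since $\mathfrak{D} = C^n = \mathcal{C}_1 \times \cdots \times \mathcal{C}_d$ is a Cartesian product, any pair of neighbors $S \sim S'$ differs in exactly one row, say row $r$. Writing $J_i \subseteq \{1,\ldots,d\}$ for the set of columns on which $f_i$ depends, the difference $f_i(S) - f_i(S')$ is determined solely by the entries of $S$ and $S'$ in columns $J_i$ (both in row $r$ and in the shared rows). Since the $J_i$'s are disjoint and the product structure of $\mathfrak{D}$ lets these column-blocks be chosen independently, the maximization decomposes:
\begin{equation*}
\max_{S \sim S' \in \mathfrak{D}} \sum_{i=1}^k \xi_i \bigl(f_i(S) - f_i(S')\bigr)^2 \;=\; \sum_{i=1}^k \xi_i \, M_i,
\end{equation*}
where $M_i := \max_{S \sim S'} (f_i(S) - f_i(S'))^2$. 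The key point is that a single pair can simultaneously attain all $M_i$'s: having chosen an optimizer for coordinate $i$ inside columns $J_i$, the choice inside columns $J_j$ for $j \ne i$ is completely unconstrained (standard statistical queries such as means and medians are row-symmetric, so the choice of differing row $r$ is immaterial across coordinates).

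Second, specializing the identity to $\xi = (1,\ldots,1)$ gives $\|f(\widetilde{S}) - f(\widetilde{S}')\|^2 = \sum_i M_i$, so any pair $(\widetilde{S},\widetilde{S}')$ in the argmax set of \eqref{eq:tildeS} (constructed by the coordinate-wise procedure above) in fact realizes each coordinate maximum, i.e.\ $(f_i(\widetilde{S}) - f_i(\widetilde{S}'))^2 = M_i$ for every $i$. Then, for an arbitrary $\xi \in \mathbb{R}_{>0}^k$,
\begin{equation*}
\|f_\xi(\widetilde{S}) - f_\xi(\widetilde{S}')\|^2 \;=\; \sum_{i=1}^k \xi_i \bigl(f_i(\widetilde{S}) - f_i(\widetilde{S}')\bigr)^2 \;=\; \sum_{i=1}^k \xi_i M_i \;=\; \Delta(f_\xi)^2,
\end{equation*}
which is the claim of the lemma.

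The only genuine obstacle is justifying that the coordinate-wise maxima $M_i$ can be attained \emph{simultaneously} by a single neighbor pair; this is exactly where the disjoint-columns assumption does the work, and is the step I would write out most carefully. Everything else is a direct rewriting of the squared $L_2$ norm after multiplying by $Diag(\xi)^{1/2}$.
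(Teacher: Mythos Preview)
Your argument is correct and follows the same route as the paper's proof: expand $\Delta(f_\xi)^2$ as $\max_{S\sim S'}\sum_i \xi_i (f_i(S)-f_i(S'))^2$, use the disjoint-columns assumption together with the product structure of $\mathfrak{D}$ to maximize each summand separately, and observe that the argmax does not depend on the positive weights $\xi_i$. You are in fact more explicit than the paper, spelling out why any maximizing pair for $\xi=1$ must realize every coordinate maximum $M_i$ and flagging the row-symmetry point needed to combine the coordinate-wise optima at a common differing row.
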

The proof is given in the Appendix.

For an agency willing to release the query $f$, releasing $f_\xi$ under the mechanism $\cal M_\xi$ with the same DP level does not add any complications. The agency needs to compute the sensitivity defined by
$\Delta({f_\xi}) \equiv \operatorname*{max}_{S\sim S'\in {\mathfrak D}}||(f_\xi(S)-f_\xi(S')\big)||$. By Lemma \ref{le:ontildeS}, this amounts to computing
$||f_\xi(\widetilde{S})-f_\xi(\widetilde{S}'))||$ using the pair $(\widetilde{S},\widetilde{S}')$ from Equation \eqref{eq:tildeS}, which is needed to compute the sensitivity of $f$.
In particular,
$\Delta({f_\xi})= \sqrt{\psi^T Diag(\xi) \psi},$ and we shall see below that the quantity $\Delta({f_\xi})/{\sigma}$ is determined by the DP level. Given a DP$(\varepsilon,\delta)$ level, the agency guarantees it by choosing $\sigma$ using Lemma \ref{le:BW} below.

The proof of Theorem \ref{th: CIfix} (and all other theorems) relies on the next lemma; it can be obtained readily from the results of \cite{balle2018improving}, which hold for any query $f$. 
\begin{lemma}\label{le:BW}
	Let ${\cal M}(S)=f(S)+U$ be a Gaussian mechanism with $U\sim N(0, \sigma^2I)$,  and for  given datasets $S$ and $S'$ set $D := D_{S,S'} = ||f(S)-f(S')|| $. 
	{\rm({\bf 1})}  If
	\begin{equation}\label{ana_gaussi_requir}
		\Phi \left(\frac{D}{2\sigma}-\frac{\varepsilon\sigma}{D}\right) - e^{\varepsilon}\Phi \left(-\frac{D}{2\sigma}-\frac{\varepsilon\sigma}{D}\right) \leq \delta,
	\end{equation}
	then for all $E \subseteq \mathbb{R}^k$,
	\begin{equation}\label{condition_rdp}
		\mathbb{P}({\cal M}
		(S) \in E)\le e^\varepsilon \mathbb{P}({\cal M}(S') \in E)+\delta.
	\end{equation}
{\rm({\bf 2})}  Setting $\widetilde D := \Delta(f) = ||f(\widetilde S)-f(\widetilde S')||$, with $(\widetilde S,\widetilde S')$ given in Equation \eqref{eq:tildeS}, Equation \eqref{ana_gaussi_requir} holds with $D$ replaced by $\widetilde D$ if and only if the inequality \eqref{condition_rdp} holds for all $S \sim S'$ and $E \subseteq \mathbb{R}^k$, that is, if and only if DP$(\varepsilon,\delta)$ holds.
\end{lemma}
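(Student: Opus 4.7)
The plan is to reduce both parts to the standard privacy-loss (hockey-stick divergence) characterization of $(\varepsilon,\delta)$-DP that underlies \cite{balle2018improving}, and then specialize to spherical Gaussians. Let $p_S$ and $p_{S'}$ denote the densities of $\mathcal{M}(S)$ and $\mathcal{M}(S')$; the privacy-loss random variable is $L := \log\bigl(p_S(X)/p_{S'}(X)\bigr)$ with $X \sim \mathcal{M}(S)$. The standard identity I would invoke is
\[\sup_{E}\,\bigl[\mathbb{P}(\mathcal{M}(S)\in E) - e^\varepsilon\,\mathbb{P}(\mathcal{M}(S')\in E)\bigr] \;=\; \mathbb{E}\bigl[(1-e^{\varepsilon-L})_+\bigr],\]
with the supremum \emph{attained} on $E^* = \{x:\,p_S(x) > e^\varepsilon p_{S'}(x)\}$. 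So \eqref{condition_rdp} holds for a given pair $(S,S')$ and every measurable $E$ if and only if this expectation is at most $\delta$.

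For part (1), I would specialize to the Gaussian case. With $X \sim N(f(S),\sigma^2 I)$ and $\|f(S)-f(S')\|=D$, writing the log-ratio of two spherical Gaussian densities and projecting onto the direction $f(S)-f(S')$ gives the exact one-dimensional law $L \sim N\bigl(D^2/(2\sigma^2),\;D^2/\sigma^2\bigr)$. Parametrizing $L = D^2/(2\sigma^2) + (D/\sigma)Z$ with $Z\sim N(0,1)$, the event $\{L>\varepsilon\}$ becomes $\{Z > \varepsilon\sigma/D - D/(2\sigma)\}$; applying the identity $\mathbb{E}[e^{aZ}\mathbf{1}\{Z\ge z_0\}] = e^{a^2/2}\Phi(a-z_0)$ with $a=-D/\sigma$ makes the factor $e^{D^2/(2\sigma^2)}$ cancel the prefactor $e^{-D^2/(2\sigma^2)}$, and $\mathbb{E}[(1-e^{\varepsilon-L})_+]$ collapses to
\[\Phi\!\left(\tfrac{D}{2\sigma} - \tfrac{\varepsilon\sigma}{D}\right) - e^\varepsilon\,\Phi\!\left(-\tfrac{D}{2\sigma} - \tfrac{\varepsilon\sigma}{D}\right),\]
i.e.\ exactly the LHS of \eqref{ana_gaussi_requir}, and (1) follows from the characterization above.

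For part (2), the key auxiliary fact is that the function
\[g(D) \;:=\; \Phi\!\left(\tfrac{D}{2\sigma} - \tfrac{\varepsilon\sigma}{D}\right) - e^\varepsilon\,\Phi\!\left(-\tfrac{D}{2\sigma} - \tfrac{\varepsilon\sigma}{D}\right)\]
is strictly increasing in $D$ on $(0,\infty)$ for fixed $\sigma,\varepsilon>0$. I would verify this by setting $u(D)=D/(2\sigma)-\varepsilon\sigma/D$ and $v(D)=-D/(2\sigma)-\varepsilon\sigma/D$, noting the algebraic identity $v(D)^2-u(D)^2=2\varepsilon$, which yields $e^\varepsilon\varphi(v(D))=\varphi(u(D))$, and therefore $g'(D)=\varphi(u(D))(u'(D)-v'(D))=\varphi(u(D))/\sigma>0$. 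Granted monotonicity, the ``if'' direction of (2) is immediate: for every $S \sim S'$ we have $D_{S,S'}\le\widetilde D$, hence $g(D_{S,S'})\le g(\widetilde D)\le\delta$, and part (1) applied pointwise delivers DP$(\varepsilon,\delta)$. The ``only if'' direction uses the tightness of the supremum: DP applied to the worst-case pair $(\widetilde S,\widetilde S')$ together with attainment of the sup on $E^*$ forces $g(\widetilde D)\le\delta$.

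The main obstacle is bookkeeping rather than anything conceptual: one must carry through the one-dimensional Gaussian integral so that the two $\Phi$ arguments emerge with the correct signs, and one must be careful that the Gaussian specialization preserves the tight equality in the hockey-stick characterization, without which the ``only if'' direction of (2) would fail. The monotonicity computation is short once the identity $v^2-u^2=2\varepsilon$ is spotted.
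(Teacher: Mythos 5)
Your proposal is correct and complete: the privacy-loss variable is indeed $N(D^2/(2\sigma^2),D^2/\sigma^2)$, the hockey-stick computation yields exactly the left-hand side of \eqref{ana_gaussi_requir}, and the monotonicity of $g(D)$ via $v^2-u^2=2\varepsilon$ cleanly gives both directions of Part (2). The paper itself offers no proof of this lemma---it simply cites Theorem~8 of Balle and Wang (2018) and notes that Part (1) follows from their method---and your argument is essentially a faithful, self-contained reconstruction of that same approach.
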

Part (2) of Lemma \ref{le:BW} coincides with Theorem 8 of \cite{balle2018improving}, and the first part follows from their method of proof.

\textit{Proof} of Theorem \ref{th: CIfix}.
Part (1) follows from the fact mentioned above that all these regions have confidence level
$P(Y\le t)$  where $Y\sim {\cal X}^2_k$.
Part (2)  is obtained by
replacing $f$ of Part (2) of Lemma \ref{le:BW} by $f_\xi$; then 
$(\widetilde D/\sigma)$ becomes $\frac{\sqrt{\psi^T Diag(\xi) \psi}}{\sigma}$ and the result follows.

To prove Part (3), note that the confidence region for the adjusted query given in Equation \eqref{eq:conf_regi_ellips_adjusted} is an ellipsoid whose volume is given by:
\begin{equation}\label{volume_adjusted_fixed}
Vol(CR^t_\xi) = V_k\cdot (\sigma^2 t)^{k/2}  \big({det[Diag(\xi)]}\big)^{-1/2}\,,
\end{equation}
where $V_k$ is the volume of the unit ball in $k$ dimensions. By Part (2), we have to minimize the volume as a function of $\xi$ subject to the constraint 
${\psi^T Diag(\xi)\psi}=  {\psi^T\psi}$, which we do by using Lagrange multipliers. See the Appendix for details.
 \qed
 
Given a DP level, the volume is minimized by choosing $\xi_i$ proportionally to $1/\psi^2_i$. Multiplying $\xi$ by a suitable constant guarantees the desired DP level.
It is easy compute the ratio of the volumes of the optimal region and the one based on the original query $f$:
\begin{equation*}\label{improv_diff_vol}
\frac{Vol(CR^t_{\xi^*})}{Vol(CR^t_{\xi=1})} 
= \left( \frac{\big(\prod_{i=1}^k \psi_i^2\big)^{1/k}}{\frac{1}{k}\sum_{i=1}^k \psi_i^2} \right)^{k/2}.
\end{equation*}
Clearly the ratio  is bounded by one, which can be seen again by the arithmetic-geometric mean inequality. 
Also, if one of the coordinates $\psi_i$ tends to zero, so does the ratio, implying the possibility of a substantial reduction in the volume obtained by using the optimally adjusted query $f_{\xi^*}$.  We remark that the ratio is decreasing in the partial order of majorization applied to $(\psi^2_1,\ldots,\psi^2_k)$; see \cite{MOA}.

\subsection{Testing hypotheses: Likelihood-ratio test}	
As in Section \ref{sec:cifix}, consider a query $f(S) \in \mathbb{R}^k$, which is observed with noise via a Gaussian privacy mechanism.  Now the analyst's  goal is to test the simple hypotheses
$ H_0: f(S)=0, \quad H_1: f(S)=\eta$. The null hypothesis is set at zero without loss of generality by a straightforward translation.  For any $\xi \in \mathbb{R}_{\ge 0}$ (a vector with nonnegative components), let $f_\xi(S)=Diag(\xi)^{1/2}f(S)$ and let
${\cal M}_\xi(S)=f_\xi(S)+U$,
where $U \sim N(0,\sigma^2 I)$ and  	$\sigma^2$ is the smallest variance such that the Gaussian mechanism ${\cal M}_{\xi=1}(S)$ guarantees $DP(\varepsilon, \delta)$ for the query $f$.

Let  $h_{\xi i}$ denote the density of ${\cal M}_\xi(S)$ under the hypothesis $H_i$, $i=0, 1$.  The log-likelihood ratio based on the observed ${\cal M}_\xi(S)$,
${\log}\big\{\frac{h_{\xi 1}({\cal M}_\xi(S))}{h_{\xi 0}({\cal M}_\xi(S))}\big\}$, is proportional to $\frac{{\cal M}_\xi(S)^T Diag(\xi)^{1/2} \eta}{\sigma^2}$, which under $H_0$ has the  $N(0, \frac{{\eta^{T}Diag(\xi)\eta}}{\sigma^2})$ distribution.
The likelihood-ratio test (which by the Neyman--Pearson lemma has a well-known optimality property)  rejects $H_0$ when the likelihood ratio is large. For a given significance level $\alpha$, the rejection region has the form
\begin{equation}
		\label{eq:rejreg}
		R_\xi=\Big\{ {\cal M}_\xi(S):  \frac{{\cal M}_\xi(S)^T Diag(\xi)^{1/2} \eta}{\sigma^2} > t\,\,\Big\}, \ \textnormal{where} \  t=\Phi^{-1}(1-\alpha) \frac{\sqrt{\eta^{T}Diag(\xi)\eta}}{\sigma}.
\end{equation}	

Let $\pi(R_\xi):=P_{H_1}({\cal M}_\xi(S) \in R_\xi)$ denote the power associated with the region $R_\xi$.

\begin{theorem}\label{th:LRfix}\leavevmode 
	
	{\rm({\bf 1})} For any fixed $\alpha$ and for all $\xi \in \mathbb{R}_{\ge 0}$, the rejection regions $R_\xi$ defined in \eqref{eq:rejreg} have significance level $\alpha$, that is, $P_{H_0}(R_\xi)=\alpha$.
	
	{\rm({\bf 2})} Assume that for two vectors $\xi_a$ and $\xi_b$ the mechanisms ${\cal M}_{\xi_a}$ and ${\cal M}_{\xi_b}$
	have the same level of DP (same $\epsilon$ and $\delta$); then
	$\varLambda(\xi_a)=\varLambda(\xi_b)$, where $\varLambda(\xi)=\frac{\sqrt{\psi^T Diag(\xi) \psi}}{\sigma}$.
	
	{\rm({\bf 3})}	Let $ j^* = \arg\max_i ( \eta_i^2/\psi_i^2),$
	and define $\xi^*$ by $\xi_{j^*}^*=||\psi||^2/\psi^2_{j^*}$  and $\xi^*_i = 0 \ \ \forall\, i\neq j^*$; then the choice $\xi=\xi^*$ maximizes the power $\pi(R_{\xi})$ 
	over  all vectors $\xi \in \mathbb{R}^k_{\ge 0}$ and the associated mechanisms $\cal M_\xi$ having the same DP level, and in particular 
	$\pi(R_{\xi^*}) \geq \pi(R_{\xi=1}), $
	with strict inequality unless $max_i( \eta_i^2/\psi_i^2) = min_i ( \eta_i^2/\psi_i^2 ).$
\end{theorem}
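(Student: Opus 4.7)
For Part (1), I would simply compute the null distribution of the test statistic. Under $H_0$, $f(S)=0$ so $\mathcal{M}_\xi(S)=U\sim N(0,\sigma^2I)$, and the statistic $\mathcal{M}_\xi(S)^T Diag(\xi)^{1/2}\eta/\sigma^2$ is a univariate zero-mean Gaussian with variance $\eta^T Diag(\xi)\eta/\sigma^2$. The prescribed threshold $t$ is exactly its $(1-\alpha)$-quantile, giving $P_{H_0}(R_\xi)=\alpha$. Part (2) is settled by the same reasoning already sketched for Theorem~\ref{th: CIfix} Part (2): apply Lemma~\ref{le:BW} to the mechanism $\mathcal{M}_\xi$ and use Lemma~\ref{le:ontildeS} to identify $\Delta(f_\xi)=\sqrt{\psi^T Diag(\xi)\psi}$. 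The DP level of $\mathcal{M}_\xi$ is then controlled, through the Balle--Wang condition \eqref{ana_gaussi_requir}, by the single quantity $\widetilde D/\sigma=\varLambda(\xi)$; since the left-hand side of \eqref{ana_gaussi_requir} is strictly monotone in $\widetilde D/\sigma$, two $\xi$-mechanisms tight at a common DP level must share a common $\varLambda$.

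For Part (3), the first step is a clean expression for the power. Under $H_1$, $\mathcal{M}_\xi(S)\sim N(Diag(\xi)^{1/2}\eta,\sigma^2I)$, so the test statistic is Gaussian with mean and variance both equal to $\eta^T Diag(\xi)\eta/\sigma^2$. Standardizing, the threshold converts to $\Phi^{-1}(1-\alpha)-\sqrt{\eta^T Diag(\xi)\eta}/\sigma$, giving
$$\pi(R_\xi)=\Phi\!\left(\frac{\sqrt{\eta^T Diag(\xi)\eta}}{\sigma}-\Phi^{-1}(1-\alpha)\right),$$
which is strictly increasing in $\eta^T Diag(\xi)\eta=\sum_i\xi_i\eta_i^2$. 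So maximizing power is equivalent to maximizing this linear functional of $\xi$ subject to the DP constraint and $\xi\in\mathbb{R}_{\ge 0}^k$.

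By Part (2), the DP constraint is itself the linear equation $\sum_i\xi_i\psi_i^2=||\psi||^2$. Substituting $a_i:=\xi_i\psi_i^2\ge 0$ turns the whole problem into the linear program: maximize $\sum_i a_i(\eta_i^2/\psi_i^2)$ over $\{a\in\mathbb{R}_{\ge 0}^k:\sum_i a_i=||\psi||^2\}$. The maximum of a linear functional over such a simplex is attained at the vertex concentrated on the coordinate with the largest coefficient, i.e.\ $a_{j^*}=||\psi||^2$ with $j^*=\arg\max_i(\eta_i^2/\psi_i^2)$ and all other coordinates zero; inverting the substitution delivers the claimed $\xi^*$. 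For the comparison with $\xi=1$, the optimum attains objective $||\psi||^2\max_i(\eta_i^2/\psi_i^2)$ while $\xi=1$ attains $||\eta||^2=\sum_i\psi_i^2(\eta_i^2/\psi_i^2)$, which is a $\psi_i^2$-weighted average of the same ratios and hence strictly less than their maximum unless $\max_i(\eta_i^2/\psi_i^2)=\min_i(\eta_i^2/\psi_i^2)$. No serious obstacle arises in this proof---the conceptual content is simply that the DP sensitivity and the noncentrality driving the power are both linear functionals of $\xi$, so the whole optimization collapses to picking a vertex of a simplex.
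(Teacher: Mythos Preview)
Your proposal is correct and follows essentially the same route as the paper: compute the power as a monotone function of $\eta^T Diag(\xi)\eta$, impose the DP constraint $\sum_i \xi_i\psi_i^2=\|\psi\|^2$ from Part~(2), substitute $a_i=\xi_i\psi_i^2$ (the paper uses $v_i$), and observe that the resulting linear program over the simplex is solved at the vertex corresponding to $j^*=\arg\max_i(\eta_i^2/\psi_i^2)$. Your weighted-average argument for the strict inequality is a slight elaboration beyond what the paper writes, but the substance is the same.
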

The right-hand side of the latter inequality pertains to the original query $f$, and thus a query of just one coordinate of $f$, the one having the largest ratio of (loosely speaking) signal ($\eta_i^2$) to noise ($\psi_i^2$) maximizes the power of the test.  
Note the difference between the  optimal query of Theorem \ref{th:LRfix} and that of Theorem \ref{th: CIfix},	which uses all coordinates of $f$.

\textit{Proof} of Theorem \ref{th:LRfix}. Part (1) follows from \eqref{eq:rejreg} and the discussion preceding it with standard calculations; Part (2) is similar to that of Theorem \ref{th: CIfix}. The proof of Part (3) is given in the Appendix.

\section{Random, normally distributed data}	\label{sec:randomdata}
So far the dataset $S$ was considered fixed, that is, nonrandom.  Statisticians often view the data as random,  model the data-generating process, and study the model's parameters. Accordingly, we now assume that the dataset, denoted by $T$, is randomly  generated as follows: the rows of $T$, $T_1,\ldots,,T_n$ are iid, where each row $T_\ell \in \mathbb{R}^d$  represents $d$ measurements of an individual in the random sample $T$. We also assume that $f$ is a linear query, that is,
$$
	f(T) = \left( f_1(T),...,f_k(T) \right)
	= \Big( \frac{1}{n}\sum_{\ell=1}^n q_1(T_\ell),...,\frac{1}{n}\sum_{\ell=1}^n q_k(T_\ell)\Big)
$$
for some functions $q_1,\ldots,q_k$. In addition, we assume that 
$q(T_\ell) := \big(q_1(T_\ell),\ldots,q_k(T_\ell)\big) \sim N(\mu^*, \Sigma)$ 
for some unknown $\mu^*$ and a known covariance matrix $\Sigma$. The normality assumption holds when the entries of $T$	are themselves normal, and $q_i$ are  linear functions. Assuming normality, possibly after transformation of the data, and iid observations is quite common in statistical analysis. It follows that $f(T)\sim N(\mu^*, \Sigma_n)$, where $\Sigma_n=\Sigma/n$. This may hold approximately by the central limit theorem even if normality of the dataset is not assumed. Here we assume that $\Sigma$ is known. The case where it is obtained via a privatized query is beyond the scope of this paper. 
Assuming that $\Sigma$ is known is sometimes natural. For example, when we
test  hypotheses on means of subpopulations, we sometimes use the covariance matrix estimated
from the general population. In annual economic surveys, for example, the focus is on change between consecutive years in, say, average income or unemployment rate;  however, one can assume that the past years' covariance matrix is roughly unchanged.  In
Section \ref{sec:numst} we give an example of blood-test data, where we use a correlations matrix estimated from the general population.

Since the observed data will depend  only on $q(T_\ell)$, we  now redefine the dataset to be $S$, consisting of the $n$ iid rows $S_\ell :=q(T_\ell)$, \, $\ell=1, \ldots,n$.
The assumption $q(T_\ell)\sim N(\mu^*, \Sigma)$ implies that these rows can take any value in $C:=\mathbb{R}^k$.  The universe of all such matrices $S$ is   $\mathfrak D := C^n =\mathbb{R}^{n \times k}$.

Our goal is to construct a confidence region for the model parameter  $\mu^*$ and test hypotheses about it. This can be done via the query $f(S)=\frac{1}{n}\sum_{\ell=1}^nS_\ell$ having the distribution $N(\mu^*, \Sigma_n)$; however, we show that posing the query  $g(S):=\Sigma_n^{-1/2}f(S)$ under the same Random Differential Privacy parameters (RDP, to be defined below) results in smaller confidence regions. We also compare the powers of certain tests of hypotheses.

We say that a  query $f$ is invariant  if $f(S)$ is invariant under permutations of the rows of $S$. This happens trivially when $f$ is a linear query as defined above.
If $f$ is invariant then the distribution of the output of any mechanism that operates on $f$ is obviously unchanged by permutations of rows. In this case it suffices to consider neighbors      $S \sim S'$ of the form
$ S=(S_1,\ldots,S_{n-1},S_n), \ \ \ S'=(S_1,\ldots,S_{n-1},S_{n+1}).$
We assume that $S_1,\ldots,S_{n+1}$ are iid rows having some distribution $Q$.
\begin{definition} \label{def:RDP}
	$(\varepsilon, \delta, \gamma$)-{\rm Random Differential Privacy (\cite{hall2013random})}. 
	A random perturbation mechanism $\cal M$ whose distribution is invariant under permutations of rows is said to be ($\varepsilon, \delta, \gamma$)-{Randomly Differentially Private}, denoted by RDP$(\varepsilon, \delta, \gamma)$, if
	\begin{equation*} {P}_{S_1,\ldots,S_{n+1}} \Big(\forall \ E \subseteq \mathbb{R}^k, \, \,  {P}({\cal M}(S) \in E|S) \\ \le e^\varepsilon \mathbb{P}({\cal M}(S') \in E|S')+\delta \Big)  \geq 1-\gamma,\end{equation*}
	where
	$S$ and  $S'$ are neighbors as above, the   probability ${P}_{S_1,\ldots,S_{n+1}}$ is with respect to  $S_1,\ldots,S_{n+1}\overset{iid}{\sim} Q$ and the probability ${P}({\cal M}(S) \in E|S)$ refers to the noise $U$ after conditioning on $S$.
	\end{definition} 
In words, instead of requiring the condition of differential privacy to hold for all $S\sim S' \in {\mathfrak D}$, we require that there be a  ``privacy set" in which any  two random neighboring datasets satisfy the DP condition, and its probability is bounded below by $1-\gamma$. An objection to this notion may arise from  the fact that under RDP ``extreme" participants, who are indeed rare, are not protected, even though they may be the ones who need privacy protection the most.
Since RDP is not in the worst-case analysis spirit of DP, we remark that DP can be obtained if, instead of  
ignoring worst cases having small probability as in RDP, the agency trims them by either removing them from the dataset or by projecting them to a given ball (that is independent of the dataset) which determines the sensitivity. Such trimming, if its probability is indeed small, corresponding to a small $\gamma$, will not overly harm the data analysis.

 To define a mechanism ${\cal M}_h(S)=h(S) +U$ (see \eqref{eq:DPDP}) that satisfies RDP$(\varepsilon, \delta, \gamma)$, we need to define a ``privacy set" $H$, which is a subset of $\mathfrak D \times \mathfrak D$ consisting of neighboring pairs $(S,S')$, that satisfies two conditions. \textbf{(A)}:  $P((S,S')\in H)=1-\gamma$, where  the probability $P$ is ${P}_{S_1,\ldots,S_{n+1}}$ of Definition \ref{def:RDP}, and \textbf{(B)}:  Equation \eqref{eq:DPDP} holds for all $E$, and  any pair of neighboring datasets $(S,S') \in H$.
 We then say that ${\cal M}_h^H(S)$ is RDP$(\varepsilon, \delta, \gamma)$ with respect to the privacy set $H$ and the query $h$.

 To construct a suitable $H$ satisfying condition \textbf{(A)} note that 
\begin{equation}\label{eq:disff}
f(S)-f(S')= \frac{1}{n}[q(S_n)-q(S_{n+1})] \sim N(0, 2\Sigma/n^2)
\end{equation} 
and
$$ ||g(S)-g(S')\,||^2=||\Sigma_n^{-1/2}[f(S)-f(S')]||^2  =||\Sigma^{-1/2}[q(S_n)-q(S_{n+1})]\,||^2/n \sim 2{\cal X}_k^2/n.
$$
Thus, if $Y \sim {\cal X}^2_k$ satisfies $P(Y \le r^2)=1-\gamma$ then 
$P\big(||g(S)-g(S')\,||^2 \le  2r^2/n \big)= 1-\gamma$, and we can choose the set $H$ to be $$H_g:=\big\{(S,S') \in \mathfrak{D}\times \mathfrak{D} \,:\,||g(S)-g(S')\,||^2 \le  2r^2/n\big\}.$$
Also, by \eqref{eq:disff}, we have that $||f(S)-f(S')\,||^2$ is distributed as $(2/n)Z^T\Sigma_n Z$ with $Z \sim N(0,I)$, and it is well known (by diagonalizing $\Sigma_n$) that the latter expression has the  
$(2/n)\sum_{i=1}^k \lambda_i X_i$ distribution, where $\lambda_i$ denote the eigenvalues of $\Sigma_n$ and $X_i$ are iid ${\cal X}^2_1$.

Another privacy set  we consider is given by 
$$H_f:=\big\{(S,S')\in \mathfrak{D}\times \mathfrak{D}\,:\,||f(S)-f(S')\,||^2 \le  C^2\big\},$$
where $C$ is such that $P((S,S')\in H_f)=1-\gamma$, and by \eqref{eq:disff} the constant $C$ depends on $\Sigma$ and $n$.

We consider three Gaussian mechanisms: 
\begin{align*}
{\cal M}^{H_g}_g(S)&=g(S)+U, \text{  where  } U\sim N(0, \sigma_g^2 I),  \\
{\cal M}^{H_g}_f(S)&=f(S)+U \text{  with  } U \sim N(0,\sigma_{fg}^2 I),\\
{\cal M}^{H_f}_f(S)&=f(S)+U \text{  with  } U \sim N(0,\sigma_f^2 I),
\end{align*}
where the first two are with respect to the privacy set  $H_g$, and the third is with respect to $H_f$.
For each of the three, an appropriate noise variance $\sigma^2_g$, $\sigma^2_{fg}$, and  $\sigma_f^2$ has to be computed, given the privacy set and the RDP parameters, so that condition \textbf{(B)} above holds.
To determine the noise variance we have to compute the sensitivity of the query $g$ on the set $H_g$ and the sensitivity of $f$ on both $H_g$ and $H_f$.

Define the sensitivity of $f$ and $g$ on $H_g$, denoted by $D(fg)$ and ${D}(g)$, respectively, and the sensitivity of $f$  on $H_f$, denoted by $D(f)$, as follows:
\begin{align}\label{eq:DfDg}
	&D(fg) := \operatorname*{max}_{\ (S, S') \in H_g}   ||f(S)-f(S')||, \nonumber\\  
	&{D}(g) := \operatorname*{max}_{\ {(S, S')} \in H_g}   ||g(S)-g(S')|| =\sqrt{2}\, r/\sqrt{n}\,,\\
	&D(f):= \operatorname*{max}_{\ {(S, S')} \in H_f}   ||f(S)-f(S')|| =C. \nonumber \end{align}

We compare the above three mechanisms with the same RDP level
in terms of the volume of confidence regions  and the power of tests of hypotheses  for the model parameter  $\mu^*$, computed from data given by  these mechanisms.
We 	shall prove in Sections \ref{sec:CIrand} and \ref{sec:testrand} that the mechanism ${\cal M}^{H_g}_g(S)$ is better than ${\cal M}^{H_g}_f(S)$  in terms of the volumes of confidence regions and the power of tests. It is easy to see that $D(f) \le D(fg)$ and we show below that this implies that ${\cal M}^{H_f}_f(S)$ is better than ${\cal M}^{H_g}_f(S)$ both in terms of the volume of confidence regions, and the power of tests of simple hypotheses. We shall also show that for small $\gamma$ we have $D(g) \le D(f)$, and that ${\cal M}^{H_g}_g(S)$ is better than ${\cal M}^{H_f}_f(S)$ in terms of volume of confidence regions.
The latter mechanism is discussed in Section \ref{sec:fHf}.	

 By the definition of RDP, the mechanism ${\cal M}^{H_g}_f(S)$ satisfies $RDP(\varepsilon, \delta, \gamma)$ when \eqref{ana_gaussi_requir} holds with $D=D(fg)$ and  $\sigma=\sigma_{fg}$, as does the mechanisms ${\cal M}^{H_g}_g(S)$ with $D=D(g)$ and  $\sigma=\sigma_g$, and likewise the mechanism ${\cal M}^{H_f}_f(S)$ with $D=D(f)$ and $\sigma=\sigma_f$.
\begin{lemma}\label{le:RDPlemma}
	If  ${\cal M}^{H_g}_g$, ${\cal M}^{H_g}_f$, and ${\cal M}^{H_f}_f$ have the same RDP, then $D(g)/\sigma_g = D(fg)/\sigma_{fg} = D(f)/\sigma_f$. The first equality is equivalent to  $\sigma_{fg}^2=\lambda_{max}(\Sigma_n)\sigma^2_g$, where $\lambda_{max}(\Sigma_n)$ denotes the largest eigenvalue of $\Sigma_n$.
\end{lemma}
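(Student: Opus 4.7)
I would split the proof into two parts: the equality of the three ratios, and the identification $D(fg)/D(g)=\sqrt{\lambda_{\max}(\Sigma_n)}$.

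For the first part, I would apply Lemma \ref{le:BW} conditionally on the relevant privacy set. For a Gaussian mechanism $h(S)+U$ with $U\sim N(0,\sigma^2 I)$ and privacy set $H$ having probability $1-\gamma$, condition (B) defining RDP$(\varepsilon,\delta,\gamma)$ is equivalent to the Balle--Wang inequality \eqref{ana_gaussi_requir} holding for $D$ equal to the sensitivity of $h$ on $H$. Setting $\rho:=D/\sigma$, the left-hand side of \eqref{ana_gaussi_requir} rewrites as $\Phi(\rho/2-\varepsilon/\rho)-e^\varepsilon\Phi(-\rho/2-\varepsilon/\rho)$, which depends only on $\rho$ and (by the analysis in \cite{balle2018improving}) is continuous and strictly increasing in $\rho$. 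Hence for fixed $(\varepsilon,\delta)$ there is a unique threshold $\rho^*$ such that the smallest noise level compatible with RDP forces $D/\sigma=\rho^*$. Since each of ${\cal M}_g^{H_g}$, ${\cal M}_f^{H_g}$, ${\cal M}_f^{H_f}$ is tuned this way and they share the same RDP parameters, this yields $D(g)/\sigma_g=D(fg)/\sigma_{fg}=D(f)/\sigma_f$.

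For the second part, I would compute $D(fg)$ directly. For any neighboring pair $(S,S')$ set $v:=g(S)-g(S')=\Sigma_n^{-1/2}[f(S)-f(S')]$. Membership in $H_g$ is the constraint $\|v\|^2\le 2r^2/n$, while $\|f(S)-f(S')\|^2=v^T\Sigma_n v$. The Rayleigh-quotient characterization gives
\[
\max_{\|v\|^2\le 2r^2/n} v^T\Sigma_n v \;=\; \lambda_{\max}(\Sigma_n)\cdot \tfrac{2r^2}{n},
\]
attained at $v$ parallel to a top eigenvector of $\Sigma_n$. Because the universe is $\mathfrak{D}=\mathbb{R}^{n\times k}$, any such $v$ is realized by a suitable choice of $(S_n,S_{n+1})$, hence the maximum is achieved. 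Recalling $D(g)^2=2r^2/n$, this gives $D(fg)^2=\lambda_{\max}(\Sigma_n)\,D(g)^2$. Combining with $\sigma_{fg}/\sigma_g=D(fg)/D(g)$ from the first part yields $\sigma_{fg}^2=\lambda_{\max}(\Sigma_n)\,\sigma_g^2$.

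The main obstacle is the first step: making precise that the Balle--Wang expression is a strictly monotone function of $D/\sigma$ alone, so that a common RDP level pins down the ratio rather than $D$ and $\sigma$ separately. Once that reduction is established, the remaining work is an elementary Rayleigh-quotient calculation together with the verification that the extremal $v$ is feasible, which is immediate from $C=\mathbb{R}^k$.
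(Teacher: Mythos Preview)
Your proposal is correct and follows essentially the same route as the paper: the first part is obtained by noting that \eqref{ana_gaussi_requir} depends on $D$ and $\sigma$ only through the ratio $D/\sigma$ and invoking Lemma~\ref{le:BW} (with the monotonicity from \cite{balle2018improving}), and the second part is exactly the Rayleigh-quotient computation $\max_{\|v\|^2\le [D(g)]^2} v^T\Sigma_n v=\lambda_{\max}(\Sigma_n)[D(g)]^2$ that the paper uses. Your additional remarks on strict monotonicity and on attainability of the extremal $v$ (using $\mathfrak{D}=\mathbb{R}^{n\times k}$) just make explicit details the paper leaves implicit.
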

\begin{proof} The first part follows from Lemma \ref{le:BW} and the above discussion.
For the second part it suffices to prove that $[D(fg)]^2=\lambda_{max}(\Sigma_n)[D(g)]^2$. To see this note that the maximization in \eqref{eq:DfDg} is equivalent to maximizing $(g(S)-g(S'))^T \Sigma (g(S)-g(S'))$ subject to $||g(S)-g(S')||^2=[D(g)]^2$, and the result follows readily from Rayleigh's theorem; see, e.g., \cite{horn2012matrix}, Chapter 4.
\end{proof}

\subsection{Confidence regions}\label{sec:CIrand}
We have 
\begin{equation*}\Sigma_n^{1/2}{\cal M}^{H_g}_g(S) \sim N \left(\mu^*, \Sigma_n(1+\sigma_g^2) \right), \ \ {\cal M}^{H_g}_f(S) \sim N(\mu^*, \Sigma_n+\sigma^2_{fg} I), \ \ {\cal M}^{H_f}_f(S) \sim N(\mu^*, \Sigma_n+\sigma_f^2 I).\end{equation*}
The standard confidence regions for $\mu^*:=E[f(S)]$ based on ${\cal M}^{H_g}_g(S)$, ${\cal M}^{H_g}_f(S)$, and ${\cal M}^{H_f}_f(S)$ are 
\begin{align*}\label{eq:conf_regi_ellips_per}
CR^t_{g} &= \Big\{ \mu \in \mathbb{R}^k: \big(\Sigma_n^{1/2}{\cal M}^{H_g}_g(S)-\mu \big)^T (\Sigma_n(1+\sigma_g^2))^{-1} \big(\Sigma_n^{1/2}{\cal M}^{H_g}_g(S)-\mu \big) \leq t \Big\},\\
	CR^t_{fg} &= \Big\{ \mu \in \mathbb{R}^k: \big({\cal M}^{H_g}_f(S)-\mu \big)^T (\Sigma_n+\sigma_{fg}^2 I)^{-1} \big({\cal M}^{H_g}_f(S)-\mu \big) \leq t \Big\},\\
			CR^t_{f} &= \Big\{ \mu \in \mathbb{R}^k: \big({\cal M}^{H_f}_f(S)-\mu \big)^T (\Sigma_n+\sigma_f^2 I)^{-1} \big({\cal M}^{H_f}_f(S)-\mu \big) \leq t \Big\}.
	\end{align*}
	
	The next theorem shows that confidence regions based on ${\cal M}^{H_g}_g$ have a smaller volume than those based on ${\cal M}^{H_g}_f$,  and, for $\gamma$ sufficiently small, also than those based on ${\cal M}^{H_f}_f$. 
	Thus, of the three natural candidates we consider, ${\cal M}^{H_g}_g$ is the best mechanism for small $\gamma$. 

\begin{theorem} \leavevmode\label{th: CIrnd} 
	
	{\rm({\bf 1})} For any fixed $t$, the confidence regions $CR^t_{g}$,  $CR^t_{fg}$, and $CR^t_{f}$ have the same confidence level; that is, for any $\mu^*$ we have $P(\mu^* \in CR^t_{g})=P(\mu^* \in CR^t_{fg})=P(\mu^* \in CR^t_{f})$.
	
	{\rm({\bf 2})} If the mechanisms ${\cal M}^{H_g}_g$, ${\cal M}^{H_g}_f$, and ${\cal M}^{H_f}_f$ have the same level of $RDP(\varepsilon, \delta, \gamma)$ then $D(g)/\sigma_g = D(fg)/\sigma_{fg}=D(f)/\sigma_f$.
	
	{\rm({\bf 3})} $Vol( CR^t_{g}) \leq Vol( CR^t_{fg}),$
	with strict inequality unless all the eigenvalues of $\Sigma_n$ are equal.
	
	{\rm({\bf 4})} For sufficiently small $\gamma$, $Vol( CR^t_{g}) \leq Vol( CR^t_{f}),$
	with strict inequality, unless all the eigenvalues of $\Sigma_n$ are equal.
\end{theorem}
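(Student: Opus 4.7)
Theorem \ref{th: CIrnd} has four parts. Parts (1) and (2) are quick. For (1), I would observe that each region has the form $\{\mu:(X-\mu)^T A^{-1}(X-\mu)\le t\}$ with $X\sim N(\mu^*,A)$: take $X=\Sigma_n^{1/2}{\cal M}^{H_g}_g(S)$ and $A=\Sigma_n(1+\sigma_g^2)$ for $CR_g^t$, and analogously for $CR_{fg}^t$ and $CR_f^t$. At $\mu=\mu^*$ the quadratic form is $\chi^2_k$-distributed, giving the common coverage $P(\chi^2_k\le t)$. Part (2) is Lemma \ref{le:RDPlemma} applied to the three mechanisms.

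For Part (3), let $\lambda_1,\ldots,\lambda_k$ denote the eigenvalues of $\Sigma_n$. The ellipsoid volumes are $Vol(CR_g^t)=V_k t^{k/2}(1+\sigma_g^2)^{k/2}\prod_i\lambda_i^{1/2}$ and $Vol(CR_{fg}^t)=V_k t^{k/2}\prod_i(\lambda_i+\sigma_{fg}^2)^{1/2}$, so the claim reduces to $\prod_i[(1+\sigma_g^2)\lambda_i]\le\prod_i(\lambda_i+\sigma_{fg}^2)$. Using $\sigma_{fg}^2=\lambda_{max}(\Sigma_n)\sigma_g^2$ from Lemma \ref{le:RDPlemma}, each right-hand factor equals $\lambda_i+\lambda_{max}(\Sigma_n)\sigma_g^2\ge\lambda_i+\lambda_i\sigma_g^2=(1+\sigma_g^2)\lambda_i$, with equality iff $\lambda_i=\lambda_{max}(\Sigma_n)$. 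Taking products yields the claim.

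Part (4) reduces analogously to $(1+\sigma_g^2)^k\prod_i\lambda_i\le\prod_i(\lambda_i+\sigma_f^2)$, but now Lemma \ref{le:RDPlemma} yields only $\sigma_f^2/\sigma_g^2=(D(f)/D(g))^2$, which by $D(f)\le D(fg)$ is bounded above by $\lambda_{max}(\Sigma_n)$ and may be strictly smaller, so the term-by-term argument of Part (3) fails. My plan is to pass to the regime $\gamma\to 0$. First, Lemma \ref{le:BW} fixes $D/\sigma$ in terms of $(\varepsilon,\delta)$ alone, and since $D(g)^2=(2/n)q_{1-\gamma}(\chi^2_k)$ diverges, so does $\sigma_g^2$. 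Second, $(D(f)/D(g))^2=q_{1-\gamma}(\sum_i\lambda_i X_i)/q_{1-\gamma}(\chi^2_k)\to\lambda_{max}(\Sigma_n)$, where $X_i\sim\chi^2_1$ iid; the upper bound $\le\lambda_{max}(\Sigma_n)$ is immediate from $\sum_i\lambda_i X_i\le\lambda_{max}(\Sigma_n)\sum_i X_i$ stochastically, and a matching lower bound follows from the fact that $\sum_i\lambda_i X_i$ has the same leading exponential tail rate $1/(2\lambda_{max}(\Sigma_n))$ as $\lambda_{max}(\Sigma_n)\chi^2_1$. Taking logarithms in the target inequality and using $\log(1+a)=\log a+O(1/a)$ as $a\to\infty$, the difference between the two sides becomes $k\log(\sigma_f^2/\sigma_g^2)-\sum_i\log\lambda_i+o(1)$, which converges to $k\log\lambda_{max}(\Sigma_n)-\sum_i\log\lambda_i=\sum_i\log(\lambda_{max}(\Sigma_n)/\lambda_i)\ge 0$, with strict inequality unless all $\lambda_i$ coincide. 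Hence $Vol(CR_g^t)<Vol(CR_f^t)$ for all sufficiently small $\gamma$.

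The main obstacle is the quantile-asymptotics step in Part (4): establishing $q_{1-\gamma}(\sum_i\lambda_i X_i)/q_{1-\gamma}(\chi^2_k)\to\lambda_{max}(\Sigma_n)$ with enough quantitative control that the $o(1)$ correction in the logarithmic expansion is dominated by the strictly positive limit $\sum_i\log(\lambda_{max}(\Sigma_n)/\lambda_i)$ for all sufficiently small $\gamma$.
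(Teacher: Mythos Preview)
Your treatment of Parts (1)--(3) matches the paper's proof essentially verbatim: chi-square coverage for (1), Lemma~\ref{le:RDPlemma} for (2), and the factor-by-factor comparison via $\sigma_{fg}^2=\lambda_{\max}(\Sigma_n)\sigma_g^2$ for (3).

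For Part (4) your route is genuinely different from the paper's. The paper does \emph{not} pass to the limit $\gamma\to 0$. Instead, it normalizes to $\bar\lambda=1$ and invokes a distributional comparison (Diaconis--Perlman, sharpened by Roosta-Khorasani) stating that $P\big(\sum_i\lambda_i X_i\le x\big)\le P(\chi^2_k\le x)$ for all $x>2k$. This gives directly $D(f)\ge D(g)$ (equivalently $\sigma_f\ge\sigma_g$) for all $\gamma$ below an explicit, rather mild threshold. The paper then closes with a Schur-convexity/Jensen step: since $\lambda\mapsto\log(1+\sigma_f^2/\lambda)$ is convex, $\sum_i\log(1+\sigma_f^2/\lambda_i)\ge k\log(1+\sigma_f^2/\bar\lambda)$, which combined with $\sigma_g^2\le\sigma_f^2/\bar\lambda$ yields $\prod_i(1+\sigma_g^2)\lambda_i\le\prod_i(\lambda_i+\sigma_f^2)$.

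Your asymptotic argument is correct in outline: the quantile ratio $q_{1-\gamma}\big(\sum_i\lambda_i X_i\big)/q_{1-\gamma}(\chi^2_k)$ does converge to $\lambda_{\max}$ (both quantiles are $\sim 2\lambda_{\max}\log(1/\gamma)$ and $\sim 2\log(1/\gamma)$ respectively, by standard tail asymptotics for weighted chi-squares), and once that is in hand your logarithmic expansion gives a limit $\sum_i\log(\lambda_{\max}/\lambda_i)>0$ that dominates the $o(1)$ terms. What you flag as the ``main obstacle'' is real but routine to fill in. The trade-off is that the paper's approach yields concrete, usable thresholds on $\gamma$ (e.g.\ $\gamma\le 0.062$ for $k=6$), whereas your argument is purely asymptotic and gives no quantitative control on how small $\gamma$ must be; on the other hand, your approach is self-contained and does not rely on the cited comparison result.
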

\textit{Proof}. \,\,
	Part (1) holds as in Theorem \ref{th: CIfix}, and Part (2) holds by Lemma \ref{le:RDPlemma}. The proof of Part (3), given in the Appendix, uses the relation $\sigma_{fg}^2=\lambda_{max}(\Sigma_n)\sigma^2_g$ of Lemma \ref{le:RDPlemma}, and a straightforward eigenvalue comparison. The proof of Part (4) is somewhat more involved. It uses a comparison of distribution functions of weighted sums of independent gamma random variables and a majorization argument. Details and references are given in the Appendix.

\subsection{Testing hypotheses: Likelihood-ratio test}\label{sec:testrand}

With $E[f(S)]=\mu^*$ we consider the hypotheses $ H_0: \mu^*=0$ and $H_1: \mu^*=\eta$ and the mechanisms ${\cal M}^{H_g}_f(S)$ and ${\cal M}^{H_g}_g(S)$ defined above. If ${\cal M}^{H_g}_f(S)$ is observed then the rejection region $R_{fg}$ of the likelihood-ratio test with significance level $\alpha$ has the form
\begin{equation*}\label{eq:optim_rej_regio_gaus_data}
	R_{fg} =\big\{ {\cal M}^{H_g}_f(S):  {\cal M}^{H_g}_f(S)^T(\Sigma_n+\sigma_{fg}^2 I)^{-1}\eta > t \,\,\big\}, \ \textnormal{where} \  t=\Phi^{-1}(1-\alpha) \sqrt{\eta^T(\Sigma_n+\sigma_{fg}^2 I)^{-1}\eta}.
\end{equation*}
If ${\cal M}^{H_g}_g(S)$ is observed then the testing problem becomes $ H_0: \mu^*=0$   vs. $ H_1: \mu^*=\Sigma_n^{-1/2}\eta$, and the rejection region $R_g$ of the likelihood-ratio test with significance level $\alpha$ has the form
\begin{equation*}
	R_g =\big\{ {\cal M}^{H_g}_g(S):  {\cal M}^{H_g}_g(S)^T[(1+\sigma_g^2)I]^{-1}\Sigma_n^{-1/2}\eta > t \big\}, \ \textnormal{where} \  t=\Phi^{-1}(1-\alpha) \sqrt{\frac{\eta^T\Sigma_n^{-1}\eta}{\sigma_g^2+1}}.
\end{equation*}

\begin{theorem} \leavevmode \label{th:LRRnd}
	
	{\rm({\bf 1})} The rejection regions $R_{fg}$ and $R_g$  have the same significance level $\alpha$.
	
	{\rm({\bf 2})} If both mechanisms ${\cal M}^{H_g}_g$ and ${\cal M}^{H_g}_f$ have the same level of $RDP(\varepsilon, \delta, \gamma)$ then $D(g)/\sigma_g = D(fg)/\sigma_{fg}$.
	
	{\rm({\bf 3})}	Let $\pi(R_g)$ and $\pi(R_{fg})$ denote the power associated with the rejection regions $R_g$ and $R_{fg}$, respectively; then
	$ \pi(R_g) \geq \pi(R_{fg})$
	with strict inequality, unless all the eigenvalues of $\ \Sigma_n$ are equal.
\end{theorem}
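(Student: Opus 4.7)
The plan is to dispose of Parts (1) and (2) quickly and concentrate the work on Part (3).

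For Part (1), under $H_0$ the two mechanism outputs are centered Gaussians: ${\cal M}^{H_g}_f(S) \sim N(0, \Sigma_n + \sigma_{fg}^2 I)$ and ${\cal M}^{H_g}_g(S) \sim N(0, (1+\sigma_g^2)I)$. Each test statistic is therefore a scalar linear functional of a centered Gaussian, hence $N(0,\tau^2)$ for an explicit $\tau^2$, and the threshold $t$ in each rejection region is precisely $\Phi^{-1}(1-\alpha)\,\tau$, giving significance $\alpha$. Part (2) is the restriction of Lemma \ref{le:RDPlemma} to ${\cal M}^{H_g}_f$ and ${\cal M}^{H_g}_g$; I would also record the attendant identity $\sigma_{fg}^2 = \lambda_{max}(\Sigma_n)\,\sigma_g^2$, which is the crucial ingredient for Part (3).

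For Part (3), I would compute both powers in closed form by a standard one-sided Gaussian argument. Under $H_1$, the statistic in $R_{fg}$ is a scalar linear functional of ${\cal M}^{H_g}_f(S) \sim N(\eta,\Sigma_n+\sigma_{fg}^2 I)$, and a direct calculation shows its mean and its variance both equal $\eta^T(\Sigma_n+\sigma_{fg}^2 I)^{-1}\eta$, so $\pi(R_{fg}) = \Phi\bigl(\sqrt{\eta^T(\Sigma_n+\sigma_{fg}^2 I)^{-1}\eta} - \Phi^{-1}(1-\alpha)\bigr)$. The analogous calculation for $R_g$, using ${\cal M}^{H_g}_g(S) \sim N(\Sigma_n^{-1/2}\eta,(1+\sigma_g^2)I)$ under $H_1$, yields mean and variance both equal to $\eta^T\Sigma_n^{-1}\eta/(1+\sigma_g^2)$, whence $\pi(R_{g}) = \Phi\bigl(\sqrt{\eta^T\Sigma_n^{-1}\eta/(1+\sigma_g^2)} - \Phi^{-1}(1-\alpha)\bigr)$. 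Monotonicity of $\Phi$ then reduces $\pi(R_g)\ge\pi(R_{fg})$ to the scalar comparison $\eta^T\Sigma_n^{-1}\eta/(1+\sigma_g^2)\ge \eta^T(\Sigma_n+\sigma_{fg}^2 I)^{-1}\eta$.

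To dispatch this inequality, I would substitute $\sigma_{fg}^2=\lambda_{max}(\Sigma_n)\sigma_g^2$ from Part (2), diagonalize $\Sigma_n=U\Lambda U^T$ with $\Lambda=\mathrm{diag}(\lambda_1,\ldots,\lambda_k)$, and set $v=U^T\eta$. The inequality then reads $\sum_{i=1}^k v_i^2\bigl[\tfrac{1}{\lambda_i(1+\sigma_g^2)}-\tfrac{1}{\lambda_i+\lambda_{max}(\Sigma_n)\sigma_g^2}\bigr]\ge 0$. A common-denominator simplification shows that the $i$-th bracket equals $\sigma_g^2(\lambda_{max}(\Sigma_n)-\lambda_i)\big/\bigl[\lambda_i(1+\sigma_g^2)(\lambda_i+\lambda_{max}(\Sigma_n)\sigma_g^2)\bigr]$, which is nonnegative and vanishes precisely when $\lambda_i=\lambda_{max}(\Sigma_n)$. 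Hence every summand is nonnegative, and the sum is strictly positive unless all $v_i$ with $\lambda_i<\lambda_{max}(\Sigma_n)$ vanish; in particular equality forces the spectrum of $\Sigma_n$ to be constant, matching the stated exceptional case. I do not anticipate a serious obstacle: the only non-obvious ingredient is the identity $\sigma_{fg}^2=\lambda_{max}(\Sigma_n)\sigma_g^2$, already supplied by Lemma \ref{le:RDPlemma}, without which the per-eigenvalue inequality would fail; everything else is two parallel univariate Gaussian power computations followed by a diagonalization.
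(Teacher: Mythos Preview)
Your argument is correct and mirrors the paper's proof: both compute the two powers explicitly, reduce $\pi(R_g)\ge\pi(R_{fg})$ to the quadratic-form inequality $\eta^T\Sigma_n^{-1}\eta/(1+\sigma_g^2)\ge \eta^T(\Sigma_n+\lambda_{\max}(\Sigma_n)\sigma_g^2 I)^{-1}\eta$ via the identity $\sigma_{fg}^2=\lambda_{\max}(\Sigma_n)\sigma_g^2$, and finish by diagonalizing $\Sigma_n$ and comparing eigenvalues. The only cosmetic difference is that the paper packages the final step as a Loewner-order statement $\Sigma_n^{-1}/(1+\sigma_g^2)\succeq(\Sigma_n+\lambda_{\max}(\Sigma_n)\sigma_g^2 I)^{-1}$, which is precisely your coordinate-wise inequality.

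One small overstatement in your last sentence: your own computation shows equality occurs exactly when every $v_i$ with $\lambda_i<\lambda_{\max}(\Sigma_n)$ vanishes, i.e., when $\eta$ lies in the top eigenspace of $\Sigma_n$. That does \emph{not} force the spectrum to be constant; e.g., if $\Sigma_n$ has a simple largest eigenvalue and $\eta$ is the corresponding eigenvector, you get equality with a non-constant spectrum. The paper's proof glosses over the equality case in the same way, so the strict-inequality clause in the theorem should really read ``unless $\eta$ lies in the $\lambda_{\max}$-eigenspace of $\Sigma_n$'' (which is automatic for all $\eta$ only when the eigenvalues coincide).
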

\begin{proof}
Part (1) is similar to Part (1) of Theorem \ref{th:LRfix}.	Part (2) is already given in Theorem \ref{th: CIrnd}.
       The proof of Part  (3), given in the Appendix, involves a simultaneous diagonalization argument and a comparison of eigenvalues using $\sigma_{fg}^2=\lambda_{max}(\Sigma_n)\sigma^2_g$.
       \end{proof}
  
   \subsection{The mechanism ${\cal M}^{H_f}_f(S)$}\label{sec:fHf}
   It is easy to see that the sensitivity $D(f)$ (on $H_f$) satisfies $D(f) \le D(fg)$. An equal RDP level implies by Lemma \ref{le:RDPlemma} that $D(fg)/\sigma_{fg}=D(f)/\sigma_f$ and therefore $\sigma_{fg} \ge \sigma_f$.
   The power of the likelihood-ratio test based on 
   ${\cal M}^{H_f}_f(S)$ is the same as in \eqref{eq:opt_pow_gaus_data}, with $\sigma_{fg}$ replaced now with $\sigma_f$. This power is easily seen to be decreasing in $\sigma$ and therefore the likelihood-ratio test based on 
   ${\cal M}^{H_f}_f(S)$ has a higher power than the test  based on ${\cal M}^{H_g}_f(S)$. On the other hand, an extensive computational study shows that the power of the test  based on ${\cal M}^{H_g}_g(S)$ may be higher or lower than that based on ${\cal M}^{H_f}_f(S)$ depending on the parameters involved. 
   In the case of Figure 1 we see that the test  based on ${\cal M}^{H_g}_g(S)$ has a higher power than that based on ${\cal M}^{H_f}_f(S)$.

\subsection{Four degrees of naivet\'{e}}

We consider the random query  $f(S)\sim N(\mu^*, \Sigma_n)$. The discussion below applies in principle to other distributions.  We discuss four ways of testing hypotheses on $\mu^*$ in the presence of perturbation noise having a known distribution. The discussion pertains either to a DP$(\varepsilon,\delta)$ mechanism ${\cal M}_f$, see \eqref{eq:DPDP}, or to an RDP$(\varepsilon,\delta,\gamma)$ mechanism ${\cal M}^H_f$ with some privacy set $H$, which together with the RDP parameters determines the variance of the added noise.

\begin{enumerate}
	
	\item The most naive approach to analyzing perturbed data is to ignore the added noise altogether and determine  rejection regions and significance levels as if $f(S)$ is observed without noise. In this case the test may not be optimal, and the significance level will be wrong. We call this approach \textit{super-naive analysis}.
	
	\item A less naive approach is to choose a test based as above on the wrong assumption that $f(S)$ is observed without noise, but to set its critical value $t$, which determines the  significance level, according to the correct distribution of the observed data, taking the noise into account. In Figure 1 we depict  the power curve (denoted by $[\textbf{a}]$ in the example below) of this approach where RDP is with respect to the privacy set $H_f$. We call this approach \textit{naive analysis}.
	
 	\item An even better approach is to choose the test  optimally by computing the likelihood-ratio test and the significance level using the correct distribution of the observed  response, taking the Gaussian noise into account. For simple queries and Gaussian noise this approach is feasible analytically.   In Figure 1 we show two curves of the power under this approach, associated with ${\cal M}^H_f$ when RDP is with respect to the privacy set $H_g$ (denoted by $[\textbf{b}]$) and $H_f$ (denoted by  $[\textbf{c}]$), respectively.
 	We call this approach \textit{optimal analysis}.
	
	\item In this paper we propose  adjusting the query to the statistical goals of the analyst and then using the optimal rejection or confidence regions based on the observed response to the adjusted query, and on its correct distribution, taking the adjustment and the noise into account. This is accomplished by the mechanism  ${\cal M}^{H_g}_g(S)$ discussed above. Its power curve  is denoted by $[\textbf{d}]$ in Figure 1. We call this approach  \textit{adjusted optimal analysis}.
	
\end{enumerate}

The first two approaches are sometimes used by practitioners; they may be acceptable for very large samples. Their properties have been studied asymptotically.

\section{A numerical example}\label{sec:numst}

We provide a simple data example. The privacy of medical data is of utmost importance. Consider a dataset consisting of blood-test results. A standard blood test contains 30-40 variables measured in different units, with ranges and variances that are very different. Some of these variables are highly correlated. Of the many blood-test measurements, we chose for the sake of our examples to focus on six variables: 
Cholesterol, High Density Lipoprotein, Apo Protein A-1, Low Density Lipoprotein, Total Lipid, and Glucose  (all having the same units, MG/DL).

In this order, their covariance matrix $\Sigma$ is given below. It is based on data from \cite{qureshi2017application, castelli1977distribution, blum1985relationship}, and other Internet medical sources. Clearly, our goal is to provide a simple example to make our point, and not as a study of how to protect blood-test data. 
$$ {\Sigma} =
\begin{pmatrix*}[r] 
	1600\, & -160\, & -400\, & 840\, & 800\, & -40\\
	*\, & 400\, & 160\, & -175\, & -200\, & 0\\
	* & * & 1600\, & 280\,  & 600\, & 0\\
	* & * & * & 1225\, & 700\, & -35\\
	* & * & * &* & 2500 & -50\\
	* & * & * & * & * & 100 \\
\end{pmatrix*}
\quad
$$

We consider the release of averages of the above six variables over a sample of size $n$, which will vary in our examples. It is quite standard for statisticians to assume (with justification by the central limit theorem) that such vectors of averages are multivariate normal. We assume the agency releases data under $RDP(\varepsilon, \delta, \gamma)$. If instead of RDP we use DP and trimming as described in Section \ref{sec:randomdata}, we obtain essentially the same results, with DP$(4\varepsilon, \delta)$.

In the examples below we consider various parameters and alternatives.
\begin{table}[htbp]
	\caption{Examples 1--4} \label{tab:freq}
	\begin{center}
	\resizebox{0.55\textwidth}{!}{	
	\begin{tabular}{ccccc}
		\textbf{Example} &\textbf{ $\eta$} & \textbf{$n$} & \textbf{$\delta$} & \textbf{$\gamma$}\\
		\hline
		(1)&(10, 5, 10, 8.75, 12.5, 2.5)&50&$0.0200$&$10^{-4}$\\
		(2)&(10, 5, 10, 8.75, 12.5, 2.5)&50&$0.0004$&$10^{-6}$\\
		(3)&(0, 0, 20, 0, 25, 5)&50&$0.0004$&$10^{-6}$\\
		(4)&(0, 0, 20, 0, 25, 5)&100&$0.0001$&$10^{-6}$\\
	\end{tabular}}
\end{center}
\end{table}

\vspace{-0.5cm}

\begin{figure}[H]
  \centering
  \includegraphics[width=0.55\linewidth]{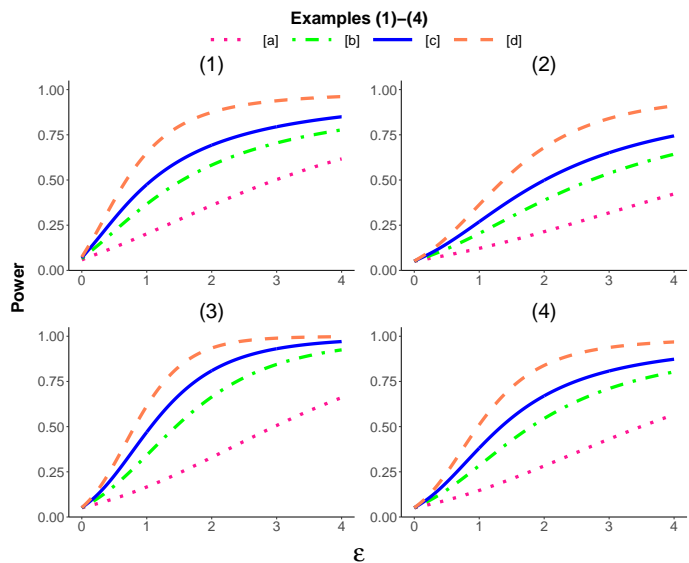}
  \caption{Comparison of power of the likelihood-ratio test with significance level $\alpha=0.05$ as a function of\, $\varepsilon$\, for naive, optimal, and adjusted optimal analyses.}
  \label{fig:fig1}
\end{figure}

\section*{Acknowledgments}
This work was supported in part by a gift to the McCourt School of Public Policy and Georgetown University, Simons Foundation Collaboration
733792, and Israel Science Foundation (ISF) grants 1044/16 and 2861/20.

We are grateful to Katrina Ligett and Moshe Shenfeld for very useful
discussions and suggestions.

\bibliographystyle{plain}  
\bibliography{Main}  

\newpage

\section{Appendix}

\textbf{\textit{Proof} of Theorem \ref{th: CIfix}} Part (3). The confidence region for the adjusted query given in \eqref{eq:conf_regi_ellips_adjusted} is an ellipsoid whose volume is given by
\begin{equation*}
Vol(CR^t_\xi) = V_k \cdot(\sigma^2 t)^{k/2}  \big({det[Diag(\xi)]}\big)^{-1/2}\,,
\end{equation*}
where $V_k$ is the volume of the unit ball in $k$ dimensions. In view of Part (2) we minimize the log of the volume as a function of $\xi$ subject to the constraint 
${\psi^T Diag(\xi)\psi}=  {\psi^T\psi}$.
We consider the Lagrangian
\begin{equation*}
{\mathcal L}(\xi_1,\ldots\xi_k, \lambda)
= - \sum_{i=1}^k \log \left(\xi_i \right) - \lambda \big[\sum_{i=1}^k\psi_i^2\xi_i - \sum_{i=1}^k\psi_i^2 \big]. 
\end{equation*}
We are minimizing a strictly convex function subject to a linear constraint. Differentiating and setting the Lagrangian to zero we readily obtain the unique minimum when $\xi_i$ is proportional to $1/\psi_i^2$. The constraint 
${\psi^T Diag(\xi) \psi}=  {\psi^T\psi}$, which by Part (2) guarantees the same DP level, implies that $\xi^*=c\,(1/\psi^2_1,\ldots,1/\psi^2_k)$ with $c=||\psi||^2/k$.
\qed\\

\textbf{\textit{Proof} of Lemma \ref{le:ontildeS}}.\quad
	It is easy to see that 	
	$\Delta^2(f_\xi) =
	\operatorname*{max}_{S\sim S'\in {\mathfrak D}}\sum_{i=1}^k \xi_i(f_i(S)-f_i(S'))^2.$
	We assumed that the components of the vector query $f= \left(f_1,\ldots,f_k\right)$ are functions of disjoint sets of columns of $S$. Since ${\mathfrak D}={\mathcal C}_1\times\ldots\times {\mathcal C}_d$,  the sum is maximized by  maximizing each summand individually. Multiplying each summand by a positive constant does not change the point where the maximum is achieved. \qed\\

\textbf{\textit{Proof} of Theorem \ref{th:LRfix} Part (3)}. Note first that for ${\cal M}_\xi$ and ${\cal M}_{\xi=1}$ to have the same DP$(\varepsilon,\delta)$ level we must have
$$\varLambda(\xi)=\frac{\sqrt{\psi^T Diag(\xi) \psi}}{\sigma}=\varLambda(1)=\frac{\sqrt{\psi^T \psi}}{\sigma}.$$
The power of the rejection region $R_\xi$ is 
\begin{equation*} \label{eq:power_adjusted_like_ratio}
\pi(R_\xi) = \, P_{H_1} \left( \frac{{\cal M}_\xi(S)^T \xi^{1/2} \eta}{\sigma^2} >            \Phi^{-1}(1-\alpha) \frac{\sqrt{ \eta^{T}Diag(\xi)\eta}}{\sigma} \right)
= 1-  \Phi \left( \Phi^{-1}(1-\alpha) - \frac{\sqrt{ \eta^{T}Diag(\xi)\eta}}{\sigma} \right),
\end{equation*}
which is increasing in ${ \eta^{T}Diag(\xi)\eta}$. Thus in order to maximize the power we have to maximize ${ \eta^{T}Diag(\xi)\eta}$        over $\xi$ subject to $\psi^T Diag(\xi) \psi={\psi^T \psi}$. Defining $v_i=\xi_i \psi^2_i$ the problem now is to maximize $\sum_i v_i\frac{\eta_i^2}{\psi_i^2}$ over $v_i \ge 0$, subject to $\sum_i v_i = ||\psi||^2$. Clearly the maximum is attained when $v_{j^*}=||\psi||^2$, where $j^*  = \arg\max_i ( \eta_i^2/\psi_i^2),$ and $v_i=0$ for $i \neq j^*$, 
completing the proof. \qed\\

\textbf{\textit{Proof} of Theorem \ref{th: CIrnd}}.
The proof of Part (3) uses the fact that
$$Vol( CR^t_{fg}) = b_k{\sqrt{det[\Sigma_n+\sigma_{fg}^2 I]}} \ \ \textnormal{and} \ \ 
Vol( CR^t_{g}) = b_k \sqrt{det[\Sigma_n(1+\sigma_g^2)]},$$
where $b_k=t^{k/2} V_k$ and \,$V_k$ is the volume of the $k$-dimensional  unit ball, and the relation $\sigma_{fg}^2=\lambda_{max}(\Sigma_n)\sigma^2_g$. The required inequality follows from
the relations
$$det \left(  \Sigma_n(\sigma_g^2+1) \right)
= \prod_{i=1}^{k} \left[ \frac{\lambda_i}{\lambda_{max}(\Sigma_n)}\sigma_{fg}^2+\lambda_i \right] 
\leq \prod_{i=1}^{k} \left[ \sigma_{fg}^2+\lambda_i \right] = det \left(  \Sigma_n+\sigma_{fg}^2 I\right),$$ where $\lambda_i$ denote the eigenvalues of $\Sigma_n$. 

To prove Part (4), we need the following fact, which is a special case of a result stated in \cite{DiacPerl}, Proposition 2.7 and Equation (10). The last part is given by \cite{Roosta-Khorasania}, p. 999, and \cite{Roosta-Khorasani}, Theorem 2.2.

\vspace{0.3cm}

\textbf{Fact}. \textit{Let $X_i\sim {\cal X}^2_1$ be iid. Without loss of generality assume that  $\lambda_1,\ldots,\lambda_k$ with $\lambda_i>0$ satisfy $\overline \lambda:=\sum_{i=1}^k \lambda_i/k=1$. Define $F_\lambda(x)=P(\sum_{i=1}^k \lambda_i X_i \le x)$ and let $F(x)$ denote the distribution function of ${\cal X}^2_k$.  Then for sufficiently large $x$ we have $F_\lambda(x)\le F(x)$.} 
\vspace{0.3cm}

More specifically, the latter inequality holds for $x>2k$.The latter lower bound, given by \cite{Roosta-Khorasania, Roosta-Khorasani}, is far from being tight, as suggested by numerical computations.

Recall that $C=D(f)$ satisfies $P\big((2/n)\sum_{i=1}^k \lambda_i X_i \le C^2\big)=1-\gamma$. For the rest of the proof set $D=D(g)$; then $D$ is defined by 
$P\big((2/n)Y \le D^2\big)=1-\gamma$, where $Y \sim {\cal X}^2_k$; see Equation \eqref{eq:DfDg}. For $\overline \lambda=1$, which can be assumed without loss of generality, the above Fact immediately implies that $C^2 \ge D^2$ for sufficiently small $\gamma$. By the last part of the above fact, for $k=6, 10$, 20, and 30, sufficiently small means $\gamma  \le 1-P(Y <12 )=0.062$ and $\gamma  \le 0.03$, 0.005, and 0.001, respectively. 
\vspace{0.3cm}

\textbf{\textit{Proof} of Part (4)}. As in the proof of Part (3), by Lemma \ref{le:RDPlemma} and then for sufficiently small $\gamma$ such that $C^2 \ge D^2 \overline \lambda$ (where $\overline \lambda=1$),  we have
$det \left(\Sigma_n(\sigma_g^2+1) \right)
= \prod_{i=1}^{k} \left[\lambda_i+ \frac{\lambda_i D^2}{C^2}\sigma_f^2 \right] 
\leq \prod_{i=1}^{k} \left[\lambda_i + \lambda_i\sigma_f^2/\overline \lambda\right]$ and now it remains to show that the latter product is bounded above by $\prod_{i=1}^{k} [\lambda_i+\sigma_f^2]=det \left(  \Sigma_n+\sigma_f^2 I\right)$.
Dividing by $\prod_{i=1}^{k}\lambda_i$ and taking log, we see that the required bound is equivalent to $\sum_{i=1}^{k}\log\left[1+ \sigma_f^2/\overline \lambda \right] \le \sum_{i=1}^{k} \log[1+\sigma_f^2/\lambda_i]$. This follows from the fact that $\log[1+\sigma_f^2/\lambda]$ is convex in $\lambda$ and therefore  $\sum_{i=1}^{k}\log[1+\sigma_f^2/\lambda_i]$ is a Schur convex function; see \cite{MOA}.
\qed\\

\textbf{\textit{Proof} of Theorem \ref{th:LRRnd} Part (3)}.
The power of the  rejection region $R_{fg}$ is given by
\begin{align}\label{eq:opt_pow_gaus_data}      \pi(R_{fg})& = P_{H_1} \Big({\cal M}^{H_g}_f(S)^T(\Sigma_n+I\sigma_{fg}^2)^{-1}\eta \nonumber > \Phi^{-1}(1-\alpha) \sqrt{\eta^T(\Sigma_n+I\sigma_{fg}^2)^{-1}\eta}
\,\,\Big) \nonumber \\
&=1-  \Phi \left( \Phi^{-1}(1-\alpha) - \sqrt{\eta^T(\Sigma_n+I\sigma_{fg}^2)^{-1}\eta} \,\right)\\
&= 1-  \Phi \left( \Phi^{-1}(1-\alpha) - \sqrt{\eta^T(\Sigma_n+I\lambda_{max}(\Sigma_n)\sigma_g^2)^{-1}\eta} \right).\nonumber
\end{align}
Likewise
$
\pi(R_g) 
=1-\Phi \left( \Phi^{-1}(1-\alpha) - \sqrt{\frac{\eta^T \Sigma_n^{-1}\eta}{\sigma_g^2+1}} \right). 
$
Therefore, $\pi(R_g) \geq \pi(R_{fg})$ if and only if 
\begin{equation*}\label{eq:power_inequality}
\frac{\eta^T \Sigma_n^{-1}\eta}{\sigma_g^2+1} \geq \eta^T(\Sigma_n+I\lambda_{max}(\Sigma_n)\sigma_g^2)^{-1}\eta.
\end{equation*}
Diagonalizing the two matrices $(1+\sigma^2_g)\Sigma_n$ and $\Sigma_n+I\lambda_{max}(\Sigma_n)\sigma_g^2$ by the common orthogonal matrix of their eigenvectors, we see that the diagonal terms, that is, the eigenvalues $(\sigma^2_g+1)\lambda_i$ of the first matrix are   less than or equal to those of the second one,  $\lambda_i+\lambda_{max}\sigma_g^2$. It follows that $\frac{\Sigma_n^{-1}}{\sigma^2_g+1}\succeq (\Sigma_n+I\lambda_{max}(\Sigma_n)\sigma_g^2)^{-1}$, where $A \succeq B$ means that $A-B$ is nonnegative definite; see \cite{horn2012matrix}, Chapter 4. The result follows.
\qed

\newpage

\end{document}